\newcommand{\keywords}[1]{\par\addvspace\baselineskip
\noindent\keywordname\enspace\ignorespaces#1}
\DeclareMathOperator*{\mycup}{\cup}
\newcommand{\ignore}[1]{}
\begin{document}

\newcommand{\NKPc}{$\text{NKP}^c$}
\newcommand{\NKPu}{$\text{NKP}^u$}
\newcommand{\ANKP}{$\text{NKP}_{\tau}^{a}$}
\newcommand{\ANKPh}{$\text{NKP}_{100}^{a}$}
\newcommand{\ANKPt}{$\text{NKP}_{1000}^{a}$}
\newcommand{\ENKP}{$\text{NKP}^{e}$}

\mainmatter

\title{Packing While Traveling: \\ Mixed Integer Programming for a Class of Nonlinear Knapsack Problems\thanks{As to appear in the proceedings of the 12th International Conference on Integration of Artificial Intelligence (AI) and Operations Research (OR) techniques in Constraint Programming (CPAIOR 2015), LNCS 9075, pp. 330–344, 2015. DOI: 10.1007/978-3-319-18008-3\textunderscore23}}
\titlerunning{Packing While Traveling}
\author{Sergey Polyakovskiy\and Frank Neumann}
\authorrunning{Sergey Polyakovskiy\and Frank Neumann}
\institute{Optimisation and Logistics\\ School of Computer Science, \\ The University of Adelaide, Australia\\ 
\mailsa\\ \url{http://cs.adelaide.edu.au/~optlog/}}

\toctitle{Packing While Traveling}
\tocauthor{Sergey Polyakovskiy}
\maketitle

\begin{abstract}
Packing and vehicle routing problems play an important role in the area of supply chain management. In this paper, we introduce a non-linear knapsack problem that occurs when packing items along a fixed route and taking into account travel time. We investigate constrained and unconstrained versions of the problem and show that both are $\mathcal{NP}$-hard. In order to solve the problems, we provide a pre-processing scheme as well as exact and approximate mixed integer programming (MIP) solutions. Our experimental results show the effectiveness of the MIP solutions and in particular point out that the approximate MIP approach often leads to near optimal results within far less computation time than the exact approach.
\keywords{Non-linear knapsack problem, NP-hardness, mixed integer programming, linearization technique, approximation technique}
\end{abstract}

\section{Introduction}
Knapsack problems belong to the core combinatorial optimization problems and have been frequently studied in the literature from the theoretical as well as experimental perspective~\cite{GareyJ79,Martello90}. While the classical knapsack problem asks for the maximizing of a linear pseudo-Boolean function under one linear constraint, different generalizations and variations have been investigated such as the multiple knapsack problem~\cite{ChekuriK05} and multi-objective knapsack problems~\cite{ErlebachKP01}.

Furthermore, knapsack problems with nonlinear objective functions have been studied in the literature from different perspectives~\cite{BretthauerS02}. Hochbaum~\cite{Hochbaum95} considered the problem of maximizing a separable concave objective function subject to a packing constraint and provided an FPTAS. An exact approach for a nonlinear knapsack problem with a nonlinear term penalizing the excessive use of the knapsack capacity has been given in~\cite{Elhedhli05}.

Nonlinear knapsack problems also play a key-role in various Vehicle Routing Problems (VRP). In recent years, the research on dependence of the fuel consumption on different factors, like a travel velocity, a load's weight and vehicle's technical specifications, in various VRP has gained attention from the operations research community. Mainly, this interest is motivated by a wish to be more accurate with the evaluation of transportation costs, and therefore to stay closer to reality. Indeed, an advanced precision would immediately benefit to transportation efficiency measured by the classic petroleum-based costs and the novel greenhouse gas emission costs. In VRP in general, and in the Green Vehicle Routing Problems (GVRP) that consider energy consumption in particular, given are a depot and a set of customers which are to be served by a set of vehicles collecting (or delivering) required items. While the set of items is fixed, the goal is to find a route for each vehicle such that the total size of assigned items does not exceed the vehicle's capacity and the total traveling cost over all vehicles is minimized. See \cite{Lin14} for an extended overview on VRP and GVRP. 
Oppositely, we address the situation with one vehicle whose route is fixed but the items can be either collected or skipped. Specifically, this situation represents a class of nonlinear knapsack problems and considers trade-off between the profits of collected items and the traveling cost affected by their total weight.
The non-linear packing problem arises in some practical applications. For example, a supplier having a single truck has to decide on goods to purchase going through the constant route in order to maximize profitability of later sales.  

Our precise setting is inspired by the recently introduced Traveling Thief Problem (TTP)~\cite
{BonyadiMB13} which combines the classical Traveling Salesperson Problem (TSP) with the 0-1 Knapsack Problem (KP). 
The TTP involves searching for a permutation of the cities and a packing such that the resulting profit is maximal.
The TTP has some relation to the Prize Collecting TSP~\cite{Balas89} where a decision is made on whether to visit a given city. In the Prize Collecting TSP, a city-dependent reward is obtained when a city is visited and a city-dependent penalty has to be paid for each non-visited city. 
In contrast to this, the TTP requires that each given city is visited. Furthermore, each city has a set of available items with weights and profits and a decision has to be made which items to pick. A selected item contributes its profit to the overall profit. However, the weight of an item leads to a higher transportation cost, and therefore has a negative impact on the overall profit.

Our non-linear knapsack problem uses the same cost function as the TTP, but assumes a fixed route. It deals with the problem which items to select when giving a fixed route from an origin to a destination.  Therefore, our approach can also be applied to solve the TTP by using the non-linear packing approach as a subroutine to solve the packing part. 
Our experimental investigations are carried out on the benchmark set for the traveling thief problem~\cite{Polyakovskiy14} where we assume that the route is fixed.

The paper is organized as follows. In Section~\ref{sec:prob}, we introduce the nonlinear knapsack problems and show in Section~\ref{sec:COMPL}  that they are $\mathcal{NP}$-hard. In Section~\ref{sec:COMPL}, we provide a pre-processing scheme which allows to identify unprofitable and compulsory items. Sections~\ref{sec:ES} and \ref{sec:AS} introduce our mixed-integer program based approaches to solve the problem exactly and approximately. We report on the results of our experimental investigations in Section~\ref{sec:CE} and finish with some conclusions.

\section{Problem Statement} \label{sec:prob}

We consider the following non-linear packing problem inspired by the traveling thief problem~\cite{BonyadiMB13}. Given is a route $N=\left(1, 2,\ldots, n+1 \right)$ as a sequence of $n+1$ cities where all cities are unique and distances $d_i>0$ between pairs of consecutive cities $(i, i+1)$, $1\leq i \leq n$. 
There is a vehicle which travels through the cities of $N$ in the order of this sequence starting its trip in the first city and ending it in the city $n+1$ as a destination point. Every city $i$, $1 \leq i \leq n$, contains a set of distinct items $M_i=\{e_{i1},\ldots,e_{im_i}\}$ and we denote by \mbox{$M = \displaystyle \mycup_{1 \leq i \leq n} M_i$} set of all items available at all cities. Each item $e_{ik} \in M$ has a positive integer profit $p_{ik}$ and a weight $w_{ik}$. The vehicle may collect a set of items on the route such that the total weight of collected items does not exceed its capacity $W$. Collecting an item $e_{ik}$ leads to a profit contribution $p_{ik}$, but increases the transportation cost as the weight $w_{ik}$ slows down the vehicle. The vehicle travels along $(i, i+1)$, $1 \leq i \leq n$, with velocity $v_i \in [\upsilon_{\min}, \upsilon_{\max}]$ which depends on the weight of the items collected in the first $i$ cities.
The goal is to find a subset of $M$ such that the difference between the profit of the selected items and the transportation cost is maximized. 

To make the problem precise we give a nonlinear binary integer program formulation. The program consists of one variable $x_{ik}$ for each item $e_{ik} \in M$ where $e_{ik}$ is chosen iff $x_{ik}=1$. A decision vector $X=\left(x_{11},\ldots,x_{nm_n}\right)$ defines the packing plan as a solution. If no item has been selected, the vehicle travels with its maximal velocity $\upsilon_{max}$. Reaching its capacity $W$, it travels with minimal velocity $\upsilon_{min}>0$.  The velocity depends on the weight of the chosen items in a linear way. The travel time $t_i = \frac{d_i}{v_i}$ along $(i, i+1)$ is the ratio of the distance $d_i$ and the current velocity 
$$\upsilon_i=\upsilon_{max}-\nu \sum_{j=1}^i \sum_{k=1}^{m_j} w_{jk} x_{jk}$$ 
which is determined by the weight of the items collected in cities $1, \ldots, i$. Here, $\nu = \frac{\upsilon_{max}-\upsilon_{min}}{W}$ is a constant value defined by the input parameters. 
The overall transportation cost is given by the sum of the travel costs along $(i, i+1)$, $1 \leq i \leq n$, multiplied by a given rent rate $R>0$.
In summary, the problem is given by  the following nonlinear binary program ({\NKPc}). 

\vspace{-.25cm}
{\footnotesize
\begin{flalign}
\mbox{max} &  \displaystyle\sum_{i=1}^n \left(\displaystyle\sum_{k=1}^ {m_i} p_{ik} x_{ik}
- \frac{Rd_i}{\upsilon_{max}-\nu \displaystyle\sum_{j=1}^i \displaystyle\sum_{k=1}^{m_j} w_{jk} x_{jk}}\right) \label{eq:1}
\\ 
\mbox{s.t.} & \displaystyle\sum_{i=1}^n \displaystyle\sum_{k=1}^{m_i} w_{ik} x_{ik} \leq W \label{eq:2}
\\ \nonumber
& x_{ik} \in \left\{0,1\right\}, \; e_{ik} \in M
\end{flalign}}
\vspace{-.25cm}

We also consider the unconstrained version {\NKPu} of {\NKPc} where we set $W \geq \sum_{e_{ik} \in M} w_{ik}$ such that every selection of items yields a feasible solution. Given a real value $B$, the decision variant of {\NKPc} and {\NKPu} has to answer the question whether the value of (\ref{eq:1}) is at least $B$.

\section{Complexity of the Problem} \label{sec:COMPL}

In this section, we investigate the complexity of {\NKPc} and {\NKPu}. {\NKPc} is NP-hard as it is a generalization of the classical NP-hard 0-1 knapsack problem ~\cite{Martello90}. In fact, assigning zero either to the rate $R$ or to every distance value $d_i$ in {\NKPc}, we obtain KP. Our contribution is the proof that the unconstrained version {\NKPu} of the problem remains $\mathcal{NP}$-hard. We show this by reducing the $\mathcal{NP}$-complete \textit{subset sum problem} (SSP) to the decision variant of {\NKPu} which asks whether there is a solution with objective value at least $B$. The input for SSP is given by $q$ positive integers $S=\left\{s_1, \ldots, s_q\right\}$ and a positive integer $Q$. The question is whether there exists a vector $X=\left(x_1, \ldots, x_q\right)$, $x_{k} \in \left\{0,1\right\}$, $1\leq k\leq q$, such that $\sum_{k=1}^q {s_kx_k} = Q$.

\begin{theorem}
{\NKPu} is $\mathcal{NP}$-hard.
\end{theorem}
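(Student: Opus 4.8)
The plan is to reduce the subset sum problem to the decision variant of {\NKPu}. Given an SSP instance with integers $S=\{s_1,\ldots,s_q\}$ and target $Q$, I would construct an {\NKPu} instance in which every item can be freely selected (since {\NKPu} is unconstrained), and the profit structure forces the optimum to be achieved exactly when the selected weights sum to $Q$. The natural idea is to place a single item $e_{k1}$ at each city $k$, $1\leq k\leq q$, whose weight $w_{k1}$ encodes $s_k$ in some scaled form, and to use a single travel segment (or a structure that collapses the nonlinear cost into a function of the total weight). Because the cost term in (\ref{eq:1}) is a convex, strictly increasing function of the accumulated weight, the trade-off between profit and transportation cost must be tuned so that the objective is maximized precisely at total weight $Q$.

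Before committing to that, I think a cleaner route is to reduce to a route with very few segments so that the transportation penalty depends only on the total collected weight $\sum w_{k1}x_{k1}$, and then choose the profits $p_{k1}$ so that the linear profit term, minus the convex penalty, peaks at weight exactly $Q$. Concretely, I would set all profits proportional to the weights, $p_{k1}=\alpha\, w_{k1}$ for a suitable constant $\alpha$, so that the profit collected is $\alpha$ times the total selected weight. The objective then becomes a one-dimensional function $g(\omega)=\alpha\omega - C(\omega)$ of the total weight $\omega=\sum w_{k1}x_{k1}$, where $C$ is the (strictly convex, increasing) transportation cost. Choosing $\alpha=C'(Q)$ makes $g$ stationary at $\omega=Q$, and strict convexity of $C$ makes $\omega=Q$ the unique maximizer of $g$ over the reals.

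**Completing the argument.**

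With this calibration, any packing whose total weight equals $Q$ attains the value $B:=g(Q)$, while strict convexity guarantees every other attainable weight yields strictly less; hence the {\NKPu} instance has objective value at least $B$ if and only if some subset of $S$ sums exactly to $Q$, which is exactly the SSP question. I would then verify that the construction is polynomial: the weights and the induced constant $\nu=\frac{\upsilon_{max}-\upsilon_{min}}{W}$, the distances, and the rent rate $R$ can all be chosen with polynomially bounded encodings, and the derivative value $C'(Q)$ used to set $\alpha$ is a rational number of polynomial size. Care is needed to ensure the velocity stays in $[\upsilon_{\min},\upsilon_{\max}]$ with $\upsilon_{\min}>0$ across the whole feasible weight range, which fixes how $W$ and $\nu$ must be scaled relative to $\sum_k s_k$.

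**Anticipated obstacle.**

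The main difficulty is arithmetic rather than conceptual. The stationarity condition $\alpha=C'(Q)$ produces an $\alpha$ that is generally irrational once the cost involves the reciprocal velocity, yet the profits $p_{ik}$ are required to be \emph{positive integers}. The hard part will be discretizing this calibration: I must show that with integer profits I can still separate the value at weight exactly $Q$ from the values at all neighboring attainable weights, using that attainable weights differ by integer amounts and that the convex cost has a strictly positive curvature gap over the bounded weight range. I expect the proof to hinge on bounding $\min(C(Q+\delta)-C(Q)-\alpha\delta)$ from below over all nonzero integer $\delta$ in range, choosing the scaling of distances and $R$ large enough that these gaps exceed the rounding error introduced by taking integer profits, so that the ``at least $B$'' threshold cleanly distinguishes a subset summing to $Q$ from every other subset.
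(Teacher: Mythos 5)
Your overall strategy is the same as the paper's: reduce SSP to the decision variant of {\NKPu}, arrange the instance so that the objective collapses to a strictly concave one-dimensional function of the total selected weight, and calibrate the parameters so that this function peaks exactly at weight $Q$, taking $B$ to be the peak value. The paper also uses a two-city route with all $q$ items at the first city, so the cost depends only on $\sum_k w_{1k}x_k$, exactly as in your ``few segments'' variant.

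However, your write-up has a genuine gap at the final step, and it is self-inflicted by the choice of which parameter to tune. You fix the cost function and set $p_{k}=\alpha w_{k}$ with $\alpha=C'(Q)$, which forces $\alpha$ to be an awkward (generically irrational) number while the problem demands positive \emph{integer} profits; you then only sketch, without carrying out, a discretization argument bounding $C(Q+\delta)-C(Q)-\alpha\delta$ against rounding errors. As written, the reduction is not complete. The paper avoids this entirely by tuning the other knob: it sets $p_{1k}=w_{1k}=s_k$ (so the profits are automatically positive integers and the profit term is exactly the total weight, i.e.\ $\alpha=1$), and instead chooses the rent rate so that the stationarity condition holds at $w=Q$. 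Concretely, with $d_1=1$, $\upsilon_{\max}=2$, $\upsilon_{\min}=1$, $W=\sum_k s_k$ and $\nu=1/W$, the objective is $f_{R}(w)=w-\frac{R}{2-w/W}$, whose unique maximizer is $w^*=W\bigl(2-\sqrt{R/W}\bigr)$; choosing $R^*=W\left(2-Q/W\right)^2=(2W-Q)^2/W$ gives $w^*=Q$ exactly. Since $R$ is a positive real parameter of the instance (not required to be integral), $R^*$ is an explicitly computable rational of polynomial size, strict concavity gives $f_{R^*}(w)<f_{R^*}(Q)$ for every $w\neq Q$ including all attainable integer weights, and no perturbation or gap-separation argument is needed. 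If you rework your construction to tune $R$ rather than the profit multiplier, your proof closes and coincides with the paper's.
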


\begin{proof} 
We reduce SSP to the decision variant of {\NKPu} which asks whether there is a solution of objective value at least $B$. 

We encode the instance of SSP given by the set of integers $S$ and the integer $Q$ as the instance $I$ of {\NKPu} having two cities. The first city contains $q$ items while the second city is a destination point free of items. We set the distance between two cities $d_1=1$, and set $p_{1k}=w_{1k}=s_k$, $1 \leq k \leq q$ and $W = \sum_{k=1}^q s_k$. Subsequently, we set $\upsilon_{max}=2$ and $\upsilon_{min}=1$ which implies $\nu = 1/W$ and define $R^*=W\left(2-Q/W\right)^2$.

Consider the nonlinear function $f_{R^*} \colon \left[0,W\right] \rightarrow \mathds{R}$ defined as

\vspace{-.25cm}
{\footnotesize
\begin{align} \label{func:1}
f_{R^*}\left(w\right)=w-\frac{R^*}{2- w/W}.
\end{align}}
\vspace{-.25cm}

\noindent $f_{R^*}$ defined on the interval $\left[0,W\right]$ is a continuous convex function that reaches its unique maximum in the point $w^*=W \cdot (2-\sqrt{ R^*/W})$ = Q, i.e. \mbox{$f_{R^*}\left(w \right)<f_{R^*}\left(w^*\right)$} for $w \in [0,W]$ and $w \not = w^*$. Then $f_{R^*}(Q)$ is the maximum value for $f_{R^*}$ when being restricted to integer input, too. Therefore, we set $B=f_{R^*}(Q)$ and the objective function for {\NKPu} is given by

\vspace{-.25cm}
{\footnotesize
\begin{align} \label{func:2}
g_{R^*}\left(x\right)=\displaystyle\sum_{k=1}^q {p_kx_k}-\frac{R^*}{2- \frac{1}{W} \displaystyle\sum_{k=1}^q {w_kx_k}}.
\end{align}}
\vspace{-.25cm}

There exists an $x \in \{0,1\}^q$ such that $g_{R^*}(x) \geq B= f_{R^*}(Q)$ iff $\sum_{k=1}^q s_kx_k=\sum_{k=1}^q w_{1k}x_k=\sum_{k=1}^q p_{1k}x_k=Q$. Therefore, the instance of SSP has answer YES iff the optimal solution of the {\NKPu} instance $I$ has objective value at least $B=f_{R^*}\left(Q\right)$. Obviously, the reduction can be carried out in polynomial time which completes the proof.
\qed
\end{proof}

\section{Pre-processing} \label{sec:RS}
We now provide a pre-processing scheme to identify items of a given instance $I$ that can be either directly included or discarded. Removing such items from the optimization process can significantly speed up the algorithms. Our pre-processing will allow to decrease the number of decision variables for mixed integer programming approaches described in Sections~\ref{sec:ES} and \ref{sec:AS}. We distinguish between two kinds of items that are identified in the pre-processing: \emph{compulsory} and \emph{unprofitable} items. We call an item \emph{compulsory} if its inclusion in any packing plan increases the value of the objective function, and call an item \emph{unprofitable} if its inclusion in any packing plan does not increase the value of the objective function. Therefore, an optimal solution has to include all compulsory items while all unprofitable items can be discarded. 

In order to identify \emph{compulsory} and \emph{unprofitable} items, we consider the total travel cost that a set of items produces.
\begin{definition}[Total Travel Cost]
Let $O \subseteq M$ be a subset of items. We define the total travel cost along route $N$ when the items of $O$ are selected as
\vspace{-.25cm}
{\footnotesize
\begin{align} 
\nonumber t_O= R \cdot \sum_{i=1}^n  \frac{d_i}{\upsilon_{max}-\nu \sum_{j=1}^i \sum_{e_{jk} \in O_j} w_{jk}},
\end{align}}
\vspace{-.25cm}

\noindent where $O_j = M_j \cap O$, $1 \leq j \leq n$, is the subset of $O$ selected at city $j$.
\end{definition}

We identify compulsory items for the unconstrained case according to the following proposition.
\begin{proposition}[Compulsory Item] \label{prop2}
Let $I$ be an arbitrary instance of {\NKPu}.
If $p_{ik} > R\left(t_{M} - t_{M \setminus \left\{ e_{ik} \right\} } \right)$, then $e_{ik}$ is a compulsory item.
\end{proposition}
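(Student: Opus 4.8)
The plan is to restate compulsoriness as a condition on the marginal travel cost of the single item $e_{ik}$, and then to show that this marginal cost is maximised precisely when every other item is already packed. For a selected set $O \subseteq M$ write the objective value as $\mathrm{obj}(O)=\sum_{e_{jk}\in O}p_{jk}-t_O$. By the definition of a compulsory item, $e_{ik}$ is compulsory iff $\mathrm{obj}(O\cup\{e_{ik}\})>\mathrm{obj}(O)$ holds for every $O$ with $e_{ik}\notin O$; cancelling the profit terms that are common to both sides, this is exactly $p_{ik}>t_{O\cup\{e_{ik}\}}-t_O$. It therefore suffices to bound the marginal travel cost $t_{O\cup\{e_{ik}\}}-t_O$ uniformly over all admissible $O$.

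Next I would prove that this marginal cost attains its maximum at $O=M\setminus\{e_{ik}\}$, i.e. that $t_{O\cup\{e_{ik}\}}-t_O\le t_M-t_{M\setminus\{e_{ik}\}}$ for every $O\subseteq M\setminus\{e_{ik}\}$. The cost $t_O$ is a sum over the route segments, and the contribution of segment $j$ is $\phi_j(w)=\frac{Rd_j}{\upsilon_{max}-\nu w}$ evaluated at the weight $w=W_j(O):=\sum_{\ell\le j}\sum_{e_{\ell k'}\in O_\ell}w_{\ell k'}$ carried through $j$. Adding $e_{ik}$ raises this weight by exactly $w_{ik}$ on every segment $j\ge i$ and leaves the earlier segments fixed, so the marginal cost decomposes as $\sum_{j\ge i}\bigl(\phi_j(W_j(O)+w_{ik})-\phi_j(W_j(O))\bigr)$. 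Since $\upsilon_{min}>0$ keeps the denominators positive on $[0,W]$, each $\phi_j$ is convex and increasing, and hence the increment $w\mapsto\phi_j(w+w_{ik})-\phi_j(w)$ is nondecreasing in $w$. Because $O\subseteq M\setminus\{e_{ik}\}$ forces $W_j(O)\le W_j(M\setminus\{e_{ik}\})$ on every segment, each increment, and hence the whole sum, is dominated by its value at the full set, which is exactly $t_M-t_{M\setminus\{e_{ik}\}}$. Combined with the hypothesis this gives $p_{ik}>t_{O\cup\{e_{ik}\}}-t_O$ for every admissible $O$, so $e_{ik}$ is compulsory.

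The main obstacle is the monotonicity step: the entire argument hinges on the fact that the incremental operator $w\mapsto\phi_j(w+w_{ik})-\phi_j(w)$ is monotone, which follows from the convexity of $\phi_j$ (its second derivative $2Rd_j\nu^2(\upsilon_{max}-\nu w)^{-3}$ is positive). This is what licenses replacing an arbitrary $O$ by the worst case $M\setminus\{e_{ik}\}$ segment by segment, and it is the place where the concrete form of the travel-cost function really enters. I would also check the boundary condition that all carried weights stay in $[0,W]$ so that every denominator is bounded below by $\upsilon_{min}$; in the unconstrained setting, where $W\ge\sum_{e_{jk}\in M}w_{jk}$, this holds automatically.
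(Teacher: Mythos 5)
Your proposal is correct and follows essentially the same route as the paper: both reduce compulsoriness to bounding the marginal travel cost of $e_{ik}$ over all subsets $O \subseteq M \setminus \{e_{ik}\}$ by its value at the worst case $O = M\setminus\{e_{ik}\}$, which the paper justifies by noting that travel time is inversely proportional to a linearly weight-dependent velocity. Your write-up merely makes explicit the segment-by-segment decomposition and the convexity of $\phi_j(w)=Rd_j/(\upsilon_{max}-\nu w)$ that the paper's one-line justification leaves implicit, which is a welcome sharpening rather than a different argument.
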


\begin{proof}
We work under the assumption that $p_{ik} > R\left(t_{M}-t_{M \setminus \left\{ e_{ik} \right\} }\right)$ holds. In the case of {\NKPu}, all the existing items can fit into the vehicle at once and all subsets $O \subseteq M$ are feasible. Let $M^* \subseteq M \setminus \left\{ e_{ik} \right\}$ be an arbitrary subset of items excluding $e_{ik}$, and consider $t_{M \setminus M^*}$ and $t_{M \setminus M^* \setminus \left\{ e_{ik}\right\} }$, respectively. Since the velocity depends linearly on the weight of collected items and the travel time $t_i=d_i/v_i$ along $(i,i+1)$ depends inversely proportional on the velocity $v_i$, we have $\left(t_{M}-t_{M \setminus \left\{ e_{ik} \right\} }\right) \geq \left(t_{M \setminus M^*}-t_{M \setminus M^* \setminus \left\{ e_{ik}\right\} }\right)$. This implies that $p_{ik}>R\left(t_{M \setminus M^*}-t_{M \setminus M^* \setminus \left\{ e_{ik}\right\} }\right)$ holds for any subset $M \setminus M^*$ of items which completes the proof.
\qed
\end{proof}

For the unconstrained variant {\NKPu}, Proposition~\ref{prop2} is valid to determine whether the item $e_{ik}$ is able to cover by its $p_{ik}$ the largest possible transportation costs it may generate when has been selected in $X$. Here, the largest possible transportation costs are computed via the worst case scenario when all the possible items are selected along with $e_{ik}$, and therefore when the vehicle has the maximal possible load and the least velocity.

Based on a given instance, we can identify unprofitable items for the constrained and unconstrained case according to the following proposition.

\begin{proposition}[Unprofitable Item, Case 1] \label{prop1}
Let $I$ be an arbitrary instance of {\NKPc} or {\NKPu}. If $p_{ik} \leq R\left(t_{\{e_{ik}\}} - t_\emptyset \right)$, then $e_{ik}$ is an unprofitable item.
\end{proposition}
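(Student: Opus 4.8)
The plan is to mirror the proof of Proposition~\ref{prop2} but in the opposite direction: a compulsory item is one whose profit exceeds its \emph{largest} possible marginal travel cost, whereas an unprofitable item should be one whose profit is dominated by its \emph{smallest} possible marginal travel cost. First I would unfold the definition: $e_{ik}$ is unprofitable exactly when adding it to \emph{any} feasible packing never increases the objective~(\ref{eq:1}). So I fix an arbitrary feasible subset $O \subseteq M \setminus \{e_{ik}\}$ for which $O \cup \{e_{ik}\}$ is also feasible and compare the objective values of $O$ and $O \cup \{e_{ik}\}$. Since the profit term is additive and only the travel-cost term changes, the gain obtained by including $e_{ik}$ equals $p_{ik} - R\left(t_{O \cup \{e_{ik}\}} - t_{O}\right)$, i.e. the profit minus the marginal travel cost the item induces on top of $O$.

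Next I would isolate that marginal cost. The item $e_{ik}$ is collected in city $i$, so it only increases the load---and hence the travel time---on the segments $(j,j+1)$ with $j \geq i$. On each such segment the contribution to $t_O$ has the form $d_j/\!\left(\upsilon_{max} - \nu\, \omega_j(O)\right)$, where $\omega_j(O)$ denotes the weight of the items of $O$ already loaded by city $j$, and including $e_{ik}$ replaces $\omega_j(O)$ by $\omega_j(O) + w_{ik}$.

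The key step is to establish $t_{O \cup \{e_{ik}\}} - t_{O} \geq t_{\{e_{ik}\}} - t_{\emptyset}$ for every feasible $O$; that is, the extra cost of $e_{ik}$ is smallest when the vehicle is otherwise empty. This is precisely the ingredient of Proposition~\ref{prop2} used in reverse: writing $h_j(\omega) = d_j/\!\left(\upsilon_{max} - \nu \omega\right)$, each $h_j$ is an increasing convex function of the accumulated weight $\omega$, so the increment $h_j(\omega + w_{ik}) - h_j(\omega)$ is nondecreasing in $\omega$ and is therefore minimized at $\omega = 0$. Since $\omega_j(O) \geq 0 = \omega_j(\emptyset)$ on every affected segment $j \geq i$, summing these per-segment inequalities yields the claim. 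I expect this monotonicity of the marginal cost to be the main obstacle, since it rests on the convexity of $\omega \mapsto 1/(\upsilon_{max} - \nu \omega)$ together with careful bookkeeping of exactly which segments item $e_{ik}$ slows down.

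Finally I would assemble the pieces. Multiplying the key inequality by $R>0$ and invoking the hypothesis $p_{ik} \leq R\left(t_{\{e_{ik}\}} - t_{\emptyset}\right)$ gives $p_{ik} \leq R\left(t_{O \cup \{e_{ik}\}} - t_{O}\right)$ for every feasible $O$, so the gain $p_{ik} - R\left(t_{O \cup \{e_{ik}\}} - t_{O}\right)$ is never positive. Hence including $e_{ik}$ in any packing plan cannot increase the objective, which is exactly the definition of an unprofitable item; and since $\emptyset$ and $\{e_{ik}\}$ are feasible in both {\NKPc} and {\NKPu}, the argument applies unchanged to both variants.
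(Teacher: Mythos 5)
Your proposal is correct and follows essentially the same route as the paper: fix an arbitrary feasible $M^*\subseteq M\setminus\{e_{ik}\}$, show the marginal travel cost $t_{M^*\cup\{e_{ik}\}}-t_{M^*}$ is minimized when $M^*=\emptyset$, and combine this with the hypothesis. The only difference is that you spell out the justification of the key inequality (convexity of $\omega\mapsto 1/(\upsilon_{max}-\nu\omega)$ and the segment-by-segment bookkeeping), which the paper compresses into a one-sentence appeal to the linear velocity dependence and the inverse proportionality of travel time.
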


\begin{proof} 
We assume that $p_{ik}\leq R\left(t_{\left\{e_{ik}\right\}}-t_\emptyset\right)$ holds. Let $M^* \subseteq M \setminus \left\{e_{ik} \right\}$ denote an arbitrary subset of items excluding $e_{ik} $ such that \mbox{$w_{ik}+\sum_{e_{jl}\in M^*} w_{jl} \leq W$} holds. We consider $t_{M^*\cup \left\{ e_{ik} \right\} }$ and $t_{M^*}$. Since the velocity depends linearly on the weight of collected items and the travel time   $t_i=d_i/v_i$ along $(i,i+1)$ depends inversely proportional on the velocity $v_i$, the inequality $\left(t_{\left\{e_{ik}\right\}}-t_\emptyset\right) \leq \left(t_{M^*\cup \left\{ e_{ik} \right\} }-t_{M^*}\right)$ holds. Therefore, $p_{ik}\leq R\left(t_{M^*\cup \left\{ e_{ik} \right\}}-t_{M^*}\right)$ holds for any $M^* \subseteq M \setminus \left\{e_{ik} \right\}$ which completes the proof.
\qed
\end{proof}

Proposition~\ref{prop1} helps to determine whether the profit $p_{ik}$ of the item $e_{ik}$ is large enough to cover the least transportation costs it incurs when selected in the packing plan $X$. In this case, the least transportation costs result from accepting the selection of $e_{ik}$ as only selected item in $X$ versus accepting empty $X$ as a solution.

Having all compulsory items included in the unconstrained case according to Proposition~\ref{prop2}, we can identify further unprofitable items. This is the case, as the inclusion of compulsory items already increases the travel time and therefore reducing the positive contribution to the overall objective value.

\begin{proposition}[Unprofitable Item, Case 2] \label{prop3}
Let $I$ be an arbitrary instance of {\NKPu} and $M^c$ be the set of all compulsory items. If $p_{ik} \leq R\left(t_{M^c \cup \left\{e_{ik}\right\} } - t_{M^c} \right)$, then $e_{ik}$ is an unprofitable item.
\end{proposition}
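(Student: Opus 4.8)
The plan is to imitate the argument for Proposition~\ref{prop1}, but to anchor the comparison at the set $M^c$ of compulsory items rather than at the empty set. The key observation is that, in the unconstrained case, every optimal solution must contain all of $M^c$: by the very definition of a compulsory item together with Proposition~\ref{prop2}, any packing plan that omits some element of $M^c$ can be strictly improved by inserting that element. Hence it suffices to show that $e_{ik}$ never pays for itself when added to a background that already contains $M^c$, since those are the only backgrounds that can occur in an optimal solution.

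Concretely, I would first assume $p_{ik} \leq R\left(t_{M^c \cup \{e_{ik}\}} - t_{M^c}\right)$ and fix an arbitrary subset $M^* \subseteq M \setminus \{e_{ik}\}$ with $M^c \subseteq M^*$. As in the proofs of Propositions~\ref{prop1} and \ref{prop2}, the travel time on each affected segment is a convex, increasing function of the accumulated load, so the marginal travel cost of inserting the fixed item $e_{ik}$ is nondecreasing with respect to set inclusion of the background. Since $M^* \supseteq M^c$, this yields $t_{M^* \cup \{e_{ik}\}} - t_{M^*} \geq t_{M^c \cup \{e_{ik}\}} - t_{M^c}$, and combining with the assumption gives $p_{ik} \leq R\left(t_{M^* \cup \{e_{ik}\}} - t_{M^*}\right)$ for every such $M^*$.

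From here I would conclude as follows. Take any optimal solution; it contains $M^c$, so if it also contains $e_{ik}$ then, writing $M^*$ for the remaining selection (which still satisfies $M^c \subseteq M^*$), the inequality above shows that removing $e_{ik}$ does not decrease the objective. Thus there is an optimal solution excluding $e_{ik}$, i.e. $e_{ik}$ may be discarded, which is exactly what it means for $e_{ik}$ to be unprofitable in this setting.

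I expect the main obstacle to be the justification of restricting the backgrounds to those containing $M^c$. Unlike Proposition~\ref{prop1}, the threshold $R\left(t_{M^c \cup \{e_{ik}\}} - t_{M^c}\right)$ is larger than $R\left(t_{\{e_{ik}\}} - t_\emptyset\right)$, so an item caught by this proposition might genuinely increase the objective when added to a light background; the claim is valid only because such light backgrounds never arise in an optimal solution. Making this reduction airtight --- that compulsory items are forced into every optimum and that the marginal-cost monotonicity is precisely what transfers the single inequality at $M^c$ to all admissible backgrounds --- is the crux, whereas the underlying convexity estimate is identical to the one already used above.
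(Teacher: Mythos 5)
Your proof is correct and follows essentially the same route as the paper: both arguments fix an arbitrary background containing $M^c$ but not $e_{ik}$ (the paper writes it as $M^c \cup M^*$ with $M^* \subseteq M \setminus \left\{ M^c \cup \{e_{ik}\}\right\}$) and invoke the same monotonicity of the marginal travel cost to transfer the hypothesis at $M^c$ to all such backgrounds. The only difference is presentational: you explicitly justify why it suffices to consider backgrounds containing $M^c$ (every optimal solution must include all compulsory items), a step the paper leaves implicit even though its stated definition of ``unprofitable'' nominally quantifies over all packing plans.
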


\begin{proof}
We assume that $p_{ik} \leq R\left(t_{M^c \cup \left\{e_{ik}\right\} }-t_{M^c} \right)$ holds.
Recall that in the case of {\NKPu}, all the existing items can fit into the vehicle at once and all subsets $O \subseteq M$ are feasible. 
Let $M^* \subseteq M \setminus \left\{ M^c \cup \left\{e_{ik}\right\} \right\}$ be an arbitrary subset of $M$ that does not include any item of  $M^c \cup \left\{e_{ik} \right\}$ and consider $t_{M^c \cup M^*}$ and $t_{M^c \cup M^* \cup \left\{ e_{ik} \right\} }$.
Since the velocity depends linearly on the weight of collected items and the travel time   $t_i=d_i/v_i$ along $(i,i+1)$ depends inversely proportional on the velocity $v_i$, we have
$\left( t_{M^c \cup \left\{e_{ik}\right\} }-t_{M^c} \right) \leq \left(t_{M^c \cup M^* \cup \left\{ e_{ik} \right\} }-t_{M^c \cup M^*} \right)$. Hence, we have $p_{ik} \leq R\left(t_{M^c \cup M^* \cup \left\{e_{ik} \right\}}-t_{M^c \cup M^*} \right)$ for any $M^*\subseteq M \setminus \left\{ M^c \cup \left\{e_{ik}\right\} \right\}$ which completes the proof.
\qed
\end{proof}

Proposition~\ref{prop3} determines for the {\NKPu} problem whether the profit $p_{ik}$ of the item $e_{ik}$ is large enough to cover the least transportation costs resulted from its selection along with all known compulsory items. Specifically, in Proposition~\ref{prop3} the list transportation costs follow from accepting the selection of $e_{ik}$ along with the set of compulsory items $M^c$ in $X$ versus accepting just the selection of $M^c$ as a solution. 

It is important to note that Proposition~\ref{prop1} can reduce {\NKPc} problem to {\NKPu} by excluding items such that the sum of the weights of all remaining items does not exceed the weight bound $W$. In this case, Propositions~\ref{prop2} and \ref{prop3} can be applied iteratively to the remaining set of items until no compulsory or unprofitable item is found. Before applying our approaches given in Section~\ref{sec:ES} and \ref{sec:AS}, we remove all unprofitable and compulsory items using these preprocessing steps.

\section{Exact Solution} \label{sec:ES}
Both {\NKPc} and {\NKPu} contain nonlinear terms in the objective function, and therefore are nonlinear binary programs. They belong to the specific class of fractional binary programming problems for which several efficient reformulation techniques exist to handle nonlinear terms. We follow the approach of \cite{Li94} and \cite{Tawarmalani02} to reformulate {\NKPc} and {\NKPu} as a linear mixed 0-1 program. 

The denominator of each fractional term in (\ref{eq:1}) is not equal to zero since $\upsilon_{min}>0$. We introduce the auxiliary real-valued variables $y_i$, $i=1, \ldots, n$,  such that $y_i = 1/\left(\upsilon_{max}-\nu \sum_{j=1}^i \sum_{k=1}^{m_j} w_{jk} x_{jk} \right)$. The variables $y_i$ express the travel time per distance unit along $\left(i, i+1\right)$. According to \cite{Li94}, we can reformulate {\NKPc} as a mixed 0-1 quadratic program by replacing (\ref{eq:1}) with (\ref{eq:Li1}) and adding the set of constraints (\ref{eq:Li2}) and (\ref{eq:Li3}).

\vspace{-.25cm}
{\footnotesize
\begin{flalign}
\mbox{max} & \displaystyle\sum_{i=1}^n \left( \displaystyle\sum_{k=1}^{m_i} p_{ik} x_{ik} - Rd_i y_i\right)\label{eq:Li1}
\\ 
\mbox{s.t.} \;& \upsilon_{max}y_i + \nu \displaystyle\sum_{j=1}^i \displaystyle\sum_{k=1}^{m_j} w_{jk} x_{jk} y_i = 1, \; i=1,\ldots,n \label{eq:Li2}
\\ 
& y_i \in \mathbb{R}_+, \; i=1,\ldots,n \label{eq:Li3}
\end{flalign}
}
\vspace{-.25cm}

If $z=xy$ is a polynomial mixed 0-1 term where $x$ is binary and $y$ is a real variable, then it can be linearized via the set of linear inequalities: (i) $z\leq Ux$;  (ii) $z\geq Lx$;  (iii) $z\leq y + L\left(x-1\right)$; (iiii) $z\geq y + U\left(x-1\right)$ (see \cite{Tawarmalani02}). $U$ and $L$ are the upper and lower bounds on $y$, i.e. $L\leq y\leq U$. We can linearize the $x_{jk} y_i$ term in (\ref{eq:Li2}) by introducing a new real variable $z^i_{jk} = x_{jk} y_i$. Furthermore, let $p_i^c$ and $w_i^c$ denote the total profit and the total weight of the compulsory items in  city $i$ according to Proposition~\ref{prop2}. Variable $y_i$, $i=1, \ldots, n$, can be bounded from below by $L_i = 1 / \left(\upsilon_{max}- \nu\sum_{j=1}^i w_j^c\right)$. Similarly, let $w_i^{max}$ be the total weight of the items (including all the compulsory items) in city $i$. We can bound $y_i$, $i=1, \ldots, n$, from above by $U_i = 1 / \left(\upsilon_{max}- \nu \cdot min\left(\sum_{j=1}^i w_j^{max},W\right)\right)$ and formulate {\NKPc} as the following linear mixed 0-1 program ({\ENKP}):

\vspace{-.25cm}
{\footnotesize
\begin{flalign}
\nonumber\mbox{max} & \displaystyle\sum_{i=1}^n \left(p_i^c +\displaystyle\sum_{k=1}^{m_i} p_{ik} x_{ik} - R d_{i} y_i\right)
\\
\nonumber\mbox{s.t.} \, &\upsilon_{max}y_i + \nu \left(w_i^c + \displaystyle\sum_{j=1}^i \displaystyle\sum_{k=1}^{m_j} w_{jk} z^i_{jk} \right)= 1, \; i=1,\ldots,n
\\
\nonumber&z^i_{jk}\leq U_ix_{jk}, \; i,j=1,\ldots,n,\; j\leq i,\; e_{jk} \in M_{j}
\\
\nonumber&z^i_{jk}\geq L_ix_{jk}, \; i,j=1,\ldots,n,\; j\leq i,\; e_{jk} \in M_{j}
\\
\nonumber&z^i_{jk}\geq y_i + U_i\left(x_{jk}-1\right), \; i,j=1,\ldots,n, \; j\leq i,\; e_{jk} \in M_{j}
\\
\nonumber&z^i_{jk}\leq y_i + L_i\left(x_{jk}-1\right), \; i,j=1,\ldots,n, \; j\leq i,\; e_{jk} \in M_{j}
\\ 
&\displaystyle\sum_{i=1}^n \displaystyle\sum_{k=1}^{m_i} w_{ik} x_{ik} \leq W \label{eq:e6}
\\  
\nonumber&x_{ik} \in \left\{0,1\right\}, \; e_{ik} \in M
\\ 
\nonumber&z^i_{jk}\in \mathbb{R}_+, \; i,j=1,\ldots,n, \; j\leq i,\; e_{jk} \in M_{j}
\\ 
\nonumber&y_i \in \mathbb{R}_+, \; i=1,\ldots,n
\end{flalign}}
\vspace{-.25cm}

We now introduce a set of inequalities in order to obtain tighter relaxations. The reformulation-linearization technique by \cite{Sherali99} uses $3n$ additional inequalities for the capacity constraint (\ref{eq:e6}). Multiplying (\ref{eq:e6}) by $y_l$, $U_l-y_l$ and $y_l-L_l$, $l=1, \ldots, n$, we obtain the inequalities

\vspace{-.25cm}
{\footnotesize
\begin{flalign}
\nonumber&\displaystyle\sum_{i=1}^n \displaystyle\sum_{k=1}^{m_i} w_{ik} z^l_{ik} \leq Wy_l;
\\  
\nonumber&U_l\displaystyle\sum_{i=1}^n \displaystyle\sum_{k=1}^{m_i} w_{ik} x_{ik}- \displaystyle\sum_{i=1}^n \displaystyle\sum_{k=1}^{m_i} w_{ik} z^l_{ik} \leq U_lW - Wy_l;
\\ 
\nonumber&\displaystyle\sum_{i=1}^n \displaystyle\sum_{k=1}^{m_i} w_{ik} z^l_{ik} - L_l\displaystyle\sum_{i=1}^n \displaystyle\sum_{k=1}^{m_i} w_{ik} x_{ik} \leq Wy_l - L_lW.
\end{flalign}}
\vspace{-.25cm}

Another set of inequalities can be derived from the fact that the item $e_{il}$ in the city $i$ should not be selected if in the same city there exists unselected item $e_{ik}$ with $p_{il}<p_{ik}$ and $w_{il}>w_{ik}$. 
Furthermore, the item $e_{jl}$ in the city $j$ should not be selected if there exists unselected item $e_{ik}$ in the city $i$, with $j<i$, $p_{jl}-\Delta_l^{ji}<p_{ik}$ and $w_{jl}>w_{ik}$ where
{\footnotesize
\begin{equation}
\nonumber \Delta_l^{ji}=R \sum_{a=j}^{i-1} d_a\left(\frac{1}{\upsilon_{max}-\nu \left(w_{jl}+\sum_{b=1}^{a} w_b^c\right)}-\frac{1}{\upsilon_{max}-\nu \sum_{b=1}^{a} w_b^c}\right)
\end{equation}}
\noindent is a lower bound on the transportation cost to deliver $e_{jl}$ from $j$ to $i$. Similarly, the item $e_{ik}$ in the city $i$ should not be selected if there exists unselected item $e_{jl}$ in the city $j$, with $j<i$, $p_{jl}-\overline{\Delta}_l^{ji}>p_{ik}$ and $w_{jl}<w_{ik}$ where
{\footnotesize
\begin{equation}
\nonumber \overline{\Delta}_l^{ji}=R\sum_{a=j}^{i-1} d_a \left(\frac{1}{\upsilon_{max}-\nu\cdot min\left(w_{jl}+\displaystyle\sum_{b=1}^{a} w_b^{max},W\right)}-\frac{1}{\upsilon_{max}-\nu\cdot min\left(\displaystyle\sum_{b=1}^{a} w_b^{max},W\right)}\right)
\end{equation}
}
\noindent is an upper bound on the transportation cost to deliver $e_{jl}$ from $j$ to $i$. This leads to the following inequalities for $i,j=1, \ldots, n$:

\vspace{-.25cm}
{\footnotesize
{\begin{flalign}
&x_{il}\leq x_{ik}, \; e_{il},e_{ik} \in M_i \;:\; l\neq k, \; p_{il}<p_{ik},\; w_{il}>w_{ik}; \label{eq:ve1}
\\  
&x_{jl}\leq x_{ik}, \; j<i, \; e_{jl}\in M_j, \; e_{ik} \in M_i, \;: \; p_{jl}-\Delta_l^{ji}<p_{ik}, \; w_{jl}>w_{ik};\label{eq:ve2}
\\  
&x_{jl}\geq x_{ik}, \; j<i, \; e_{jl}\in M_j, \; e_{ik} \in M_i, \;: \; p_{jl}-\overline{\Delta}_l^{ji}>p_{ik}, \; w_{jl}<w_{ik}.\label{eq:ve3}
\end{flalign}
}
\vspace{-.25cm}

\section{Approximate Solution} \label{sec:AS}
In practice, the use of approximations is an efficient way to deal with nonlinear terms. Although the approximate solution is likely to be different from the exact one, it might be close enough and obtainable in a reasonable computational time. 

Consider an arbitrary pair $\left(i,i+1\right)$, $i=1, \ldots, n$, and the traveling time $t'_i \in [t_{\min}, t_{\max}]$ per distance unit  along it. Here $t_{max}=1/\upsilon_{min}$ and $t_{min}=1/\upsilon_{max}$ denote the maximum and minimum travel time per unit, respectively. 
We partition the interval $\left[t_{min},t_{max}\right]$ into $\tau$ equal-sized sub-intervals and determine thus a set $T=\left\{T_1,\ldots,T_{\tau}\right\}$ of straight line segments to approximate the curve $t\left(\upsilon\right)$ as illustrated in Figure~\ref{fig:fig1}. Each segment $a \in T$ is characterized by its minimal velocity $\upsilon_a^{min}$ and its corresponding maximum traveling time per distance unit $t_a^{max}$, and by its maximum velocity $\upsilon_a^{max}$ and its corresponding minimum traveling time per distance unit $t_a^{min}$. Specifically, $\left(\upsilon_a^{min},t_a^{max}\right)$ and $\left(\upsilon_a^{max},t_a^{min}\right)$ are the endpoints of $a \in T$ referred to as breakpoints. We approximate $t'_i$ by the linear combination of $t_a^{max}$ and $t_a^{min}$ if $\upsilon_i \in \left[\upsilon_a^{min},\upsilon_a^{max}\right]$. 

\begin{figure}[htb]
\centering
\includegraphics[height=5cm]{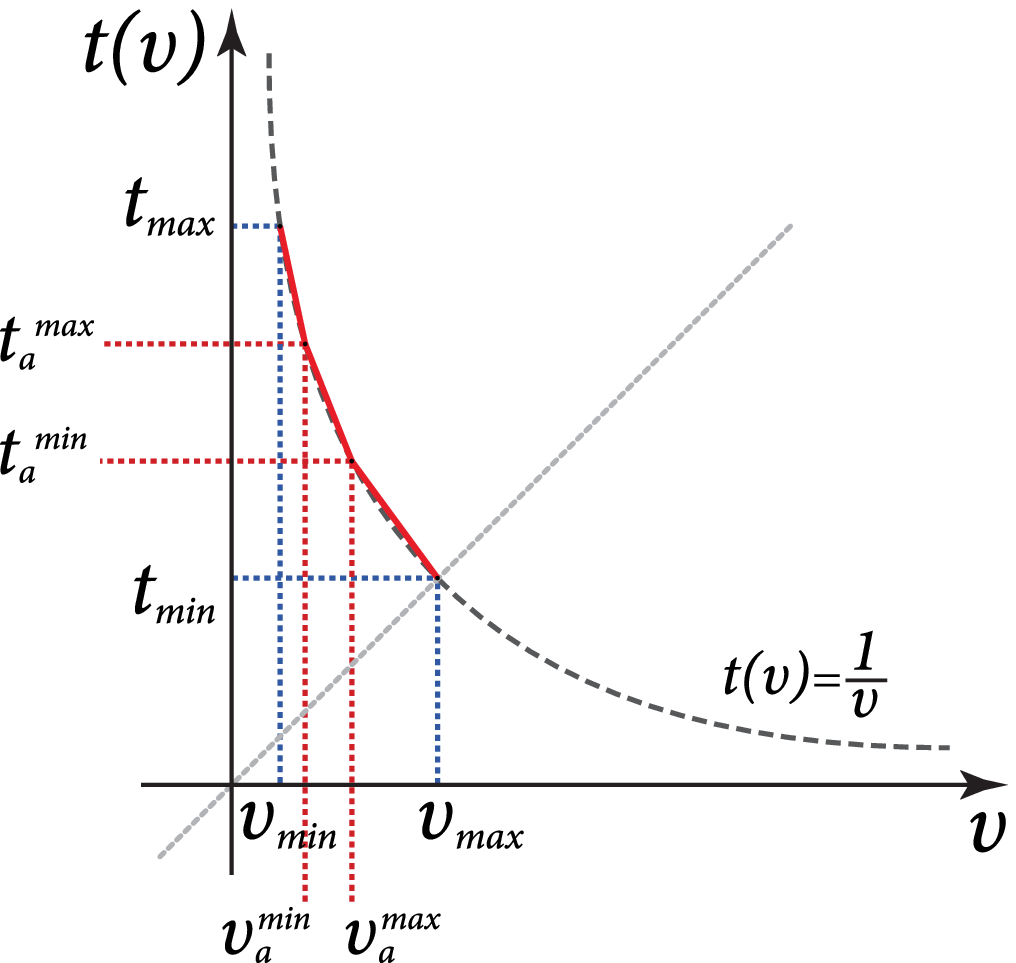}
\caption{Piecewise linear approximation of $t\left(\upsilon\right)=1/\upsilon$}
\label{fig:fig1}
\end{figure}

Our approximation model uses three types of decision variables in addition to the binary variable $x_{ik}$ for each item $e_{ik} \in M$ from Section~\ref{sec:prob}. Let $w_i$ be a real variable equal to the total weight of selected items when traveling along the $\left(i,i+1\right)$. Let $p_i$ be a real variable equal to the difference of the total profit of selected items and their total transportation costs when delivering them to city $i+1$. We set $w_0=p_0=0$. Let $A_i \subseteq T$, $1 \leq i \leq n$, denote a set of possible segments to which velocity $\upsilon_i$ of the vehicle may belong, i.e. $A_i = \left\{a \in T\; :\;
\left( \upsilon_a^{\min} \in \left[ \upsilon_i^{\min}, \upsilon_i^{\max} \right] \right) \vee 
\left( \upsilon_a^{\max} \in \left[ \upsilon_i^{\min}, \upsilon_i^{\max} \right] \right) \right\}$, where $\upsilon_i^{\max} = \upsilon_{\max}- \nu\sum_{j=1}^i w_j^c$ is the maximal possible velocity that the vehicle can move along $\left(i,i+1\right)$ when packing in all compulsory items only, and $\upsilon_i^{\min}=\upsilon_{max}- \nu \cdot \min\left(\sum_{j=1}^i w_j^{max},W\right)$ the minimum possible velocity along $(i, i+1)$ after having packed in all items available in cities $1, \ldots, i$. Actually, we have $\upsilon_i \in \left[\upsilon_i^{min}, \upsilon_i^{max}\right]$.

When $\upsilon_i \in \left[ \upsilon_a^{\min}, \upsilon_a^{\max} \right]$ for $a \in T$, any point in between endpoints of $a$ is a weighted sum of them. Let $B_i$ denote a set of all breakpoints that the linear segments of $A_i$ have. Then the value of the real variable $y_{ib} \in \left[0,1\right]$ is a weight assigned to the breakpoint $b \in B_i$, $b \sim \left(\upsilon_b,t_b\right)$. {\NKPc} (and {\NKPu}) can be approximated by the following linear mixed 0-1 program ({\ANKP}):

\vspace{-.25cm}
{\footnotesize
\begin{flalign}
\mbox{max} \;  & p_n \label{eq:a0}
\\
\mbox{s.t.} \; & p_i=p_{i-1}+p_i^c+\displaystyle\sum_{e_{ik}\in M_i} p_{ik}x_{ik}-Rd_i\displaystyle\sum_{b\in B_i} t_by_{ib}, \; i=1,\ldots,n \label{eq:a1}
\\
& w_i=w_{i-1} + w_i^c+\displaystyle\sum_{e_{ik}\in M_i} w_{ik}x_{ik}, \; i=1,\ldots,n \label{eq:a2}
\\
& \nu w_i+\displaystyle\sum_{b\in B_i} \upsilon_b y_{ib}=\upsilon_{max}, \; i=1,\ldots,n \label{eq:a3}
\\
& \displaystyle\sum_{b\in B_i} y_{ib} = 1, \; i=1,\ldots,n \label{eq:a4}
\\
& w_n\leq W \label{eq:a5}
\\  
& x_{ik} \in \left\{0,1\right\}, \; e_{ik} \in M \label{eq:a6}
\\  
& y_{ib} \in \left[0,1\right], \; i=1,\ldots,n,\;b \in B_i \label{eq:a7}
\\ 
& p_i\in \mathbb{R}, \; i=1,\ldots,n \label{eq:a8}
\\
& w_i\in \mathbb{R}_{\geq 0}, \; i=1,\ldots,n \label{eq:a9}
\\
& w_0=p_0=0
\end{flalign}
}
\vspace{-.25cm}

Equation (\ref{eq:a0}) defines the objective $p_n$ as the difference of the total profit of selected items and their total transportation costs delivered to city $n+1$. Since the transportation costs are approximated in {\ANKP}, the actual objective value for {\NKPc} (and {\NKPu}) should be computed on values of decision variables of vector $X$. Equation (\ref{eq:a1}) computes the difference $p_i$ of the total profit of selected items and their total transportation costs when arriving at city $i+1$ by summing up the value of $p_{i-1}$ concerning $\left(i-1,i\right)$, the profit of compulsory items $p_i^c$ and the profit $\sum_{e_{ik}\in M_i} p_{ik}x_{ik}$ of items selected in city $i$, and subtracting the approximated transportation costs along $(i, i+1)$. Equation (\ref{eq:a2}) gives the weight $w_i$ of the selected items when the vehicle departs city $i$ by summing up $w_{i-1}$, the weight of compulsory items $w_i^c$ and the weight $\sum_{e_{ik}\in M_i} w_{ik}x_{ik}$ of items selected in city $i$. Equation (\ref{eq:a3}) implicitly defines the segment $a \in A_i$ to which the velocity of the vehicle $\upsilon_i$ belongs and sets the weights of its breakpoints. Equation (\ref{eq:a4}) forces the total weight of the breakpoints of $B_i$ be exactly 1. Equation (\ref{eq:a5}) imposes the capacity constraint, and Eq. (\ref{eq:a6}) declares $x_{ik}$ as binary. Equation (\ref{eq:a7}) states $y_{ib}$ as a real variable defined in $\left[0,1\right]$. Finally, Equation (\ref{eq:a8}) declares $p_i$ as a real variable, while Eq. (\ref{eq:a9}) defines $w_i$ as a non-negative real. A solution of {\ANKP} can be used as a starting solution for {\ENKP} in the case that all sets of inequalities (\ref{eq:ve1}), (\ref{eq:ve2}) and (\ref{eq:ve3}) are met.

\section{Computational Experiments} \label{sec:CE}

\begin{table}[!htbp]
\centering
\caption{Results of Computational Experiments on Small Size Instances}
\label{tab:res1}
{\tiny
\begin{tabular}{r||r|r|c||r|r||r|r|r||r|r|r}
\hline
instance & $m$ & $\alpha$ & {\textit{ver}} & $t^e$ & $gap^e$ & $\rho^{100}$ & $t^{100}$ & $\beta^{100}$ & $\rho^{1000}$ & $t^{1000}$ & $\beta^{1000}$\\
\hline
\hline
\multicolumn{12}{c}{instance family \texttt{eil51}} \\
\hline
uncorr\_01 & 50 & 42.0 & c & 1 & 0.00 & 1.0000 & 0 & 56.9 & 1.0000 & 1 & 55.9 \\
uncorr\_06 & 50 & 14.0 & c & 3 & 0.00 & 1.0000 & 0 & 39.9 & 1.0000 & 0 & 38.7 \\
uncorr\_10 & 50 & 50.0 & u & 1 & 0.00 & 1.0000 & 0 & 11.3 & 1.0000 & 0 & 9.4 \\
uncorr-s-w\_01 & 50 & 30.0 & c & 0 & 0.00 & 1.0000 & 0 & 79.0 & 1.0000 & 1 & 78.0 \\
uncorr-s-w\_06 & 50 & 24.0 & c & 3 & 0.00 & 1.0000 & 0 & 36.5 & 1.0000 & 0 & 35.2 \\
uncorr-s-w\_10 & 50 & 34.0 & u & 3 & 0.00 & 1.0000 & 0 & 13.4 & 1.0000 & 0 & 11.9 \\
b-s-corr\_01 & 50 & 4.0 & c & 2 & 0.00 & 1.0000 & 0 & 91.5 & 1.0000 & 2 & 90.5 \\
b-s-corr\_06 & 50 & 0.0 & c & 249 & 0.00 & 1.0000 & 0 & 54.5 & 1.0000 & 1 & 53.3 \\
b-s-corr\_10 & 50 & 0.0 & c & 139 & 0.00 & 1.0000 & 0 & 26.2 & 1.0000 & 0 & 24.9 \\
uncorr\_01 & 250 & 39.2 & c & 1855 & 0.00 & 1.0000 & 0 & 66.8 & 1.0000 & 1 & 65.7 \\
uncorr\_06 & 250 & 16.4 & c & - & 10.66 & 1.0000 & 0 & 39.0 & 1.0000 & 0 & 37.8 \\
uncorr\_10 & 250 & 54.4 & u & 268 & 0.00 & 1.0000 & 0 & 11.2 & 1.0000 & 0 & 9.5 \\
uncorr-s-w\_01 & 250 & 20.8 & c & 22 & 0.00 & 1.0000 & 0 & 89.8 & 1.0000 & 1 & 88.8 \\
uncorr-s-w\_06 & 250 & 14.0 & c & - & 25.20 & 1.0000 & 0 & 45.5 & 1.0000 & 0 & 44.2 \\
uncorr-s-w\_10 & 250 & 19.2 & u & 73472 & 0.00 & 1.0000 & 0 & 16.0 & 1.0000 & 0 & 14.6 \\
b-s-corr\_01 & 250 & 0.0 & c & - & 0.89 & 1.0000 & 0 & 92.0 & 1.0000 & 1 & 91.1 \\
b-s-corr\_06 & 250 & 0.0 & c & - & 53.48 & 1.0000 & 0 & 56.9 & 1.0000 & 1 & 55.7 \\
b-s-corr\_10 & 250 & 0.0 & c & - & 60.94 & 1.0000 & 0 & 27.3 & 1.0000 & 0 & 25.9 \\
uncorr\_01 & 500 & 37.0 & c & - & 14.82 & 1.0000 & 0 & 69.1 & 1.0000 & 1 & 68.0 \\
uncorr\_06 & 500 & 15.2 & c & - & 21.26 & 1.0000 & 0 & 39.6 & 1.0000 & 0 & 38.3 \\
uncorr\_10 & 500 & 51.4 & u & - & 1.27 & 1.0000 & 0 & 11.8 & 1.0000 & 0 & 10.1 \\
uncorr-s-w\_01 & 500 & 20.2 & c & - & 1.80 & 1.0000 & 0 & 90.8 & 1.0000 & 1 & 89.9 \\
uncorr-s-w\_06 & 500 & 15.2 & c & - & 37.83 & 0.9999 & 0 & 45.1 & 1.0000 & 0 & 43.9 \\
uncorr-s-w\_10 & 500 & 18.6 & u & - & 4.44 & 1.0000 & 0 & 16.4 & 1.0000 & 0 & 15.0 \\
b-s-corr\_01 & 500 & 0.0 & c & - & 5.97 & 1.0000 & 0 & 93.1 & 1.0000 & 2 & 92.1 \\
b-s-corr\_06 & 500 & 0.0 & c & - & 49.28 & 1.0000 & 0 & 56.5 & 1.0000 & 0 & 55.4 \\
b-s-corr\_10 & 500 & 0.0 & c & - & 71.87 & 1.0000 & 0 & 26.6 & 1.0000 & 0 & 25.2 \\
\hline 
\hline
\multicolumn{12}{c}{instance family \texttt{eil76}} \\
\hline
uncorr\_01 & 75 & 26.7 & c & 4 & 0.00 & 1.0000 & 0 & 77.7 & 1.0000 & 1 & 76.7 \\
uncorr\_06 & 75 & 14.7 & c & 50 & 0.00 & 1.0000 & 0 & 34.3 & 1.0000 & 0 & 33.1 \\
uncorr\_10 & 75 & 48.0 & u & 15 & 0.00 & 1.0000 & 0 & 11.5 & 1.0000 & 0 & 9.6 \\
uncorr-s-w\_01 & 75 & 26.7 & c & 1 & 0.00 & 1.0000 & 0 & 79.2 & 1.0000 & 3 & 78.2 \\
uncorr-s-w\_06 & 75 & 17.3 & c & 82 & 0.00 & 1.0000 & 0 & 41.3 & 1.0000 & 1 & 40.1 \\
uncorr-s-w\_10 & 75 & 16.0 & u & 9 & 0.00 & 1.0000 & 0 & 16.8 & 1.0000 & 0 & 15.4 \\
b-s-corr\_01 & 75 & 0.0 & c & 6 & 0.00 & 1.0000 & 0 & 94.7 & 1.0000 & 1 & 93.8 \\
b-s-corr\_06 & 75 & 0.0 & c & - & 8.53 & 1.0000 & 0 & 59.7 & 1.0000 & 2 & 58.5 \\
b-s-corr\_10 & 75 & 0.0 & c & 4555 & 0.00 & 1.0000 & 0 & 25.9 & 1.0000 & 0 & 24.5 \\
uncorr\_01 & 375 & 38.1 & c & - & 15.49 & 1.0000 & 0 & 67.2 & 1.0000 & 2 & 66.1 \\
uncorr\_06 & 375 & 16.0 & c & - & 18.04 & 1.0000 & 0 & 37.5 & 1.0000 & 0 & 36.2 \\
uncorr\_10 & 375 & 49.3 & u & - & 0.57 & 1.0000 & 0 & 12.0 & 1.0000 & 0 & 10.2 \\
uncorr-s-w\_01 & 375 & 14.9 & c & 30376 & 0.00 & 1.0000 & 0 & 90.9 & 1.0000 & 5 & 89.9 \\
uncorr-s-w\_06 & 375 & 12.3 & c & - & 48.36 & 1.0000 & 0 & 47.4 & 1.0000 & 1 & 46.2 \\
uncorr-s-w\_10 & 375 & 14.9 & u & - & 3.70 & 1.0000 & 0 & 17.3 & 1.0000 & 0 & 15.9 \\
b-s-corr\_01 & 375 & 0.0 & c & - & 9.32 & 1.0000 & 0 & 95.4 & 1.0000 & 2 & 94.4 \\
b-s-corr\_06 & 375 & 0.0 & c & - & 60.98 & 1.0000 & 0 & 57.4 & 1.0000 & 1 & 56.2 \\
b-s-corr\_10 & 375 & 0.0 & c & - & 69.90 & 1.0000 & 0 & 27.8 & 1.0000 & 1 & 26.6 \\
uncorr\_01 & 750 & 32.5 & c & - & 19.52 & 1.0000 & 0 & 72.3 & 1.0000 & 2 & 71.2 \\
uncorr\_06 & 750 & 14.8 & c & - & 33.14 & 1.0000 & 0 & 39.5 & 1.0000 & 0 & 38.3 \\
uncorr\_10 & 750 & 43.1 & u & - & 5.25 & 1.0000 & 0 & 13.1 & 1.0000 & 0 & 11.4 \\
uncorr-s-w\_01 & 750 & 16.7 & c & - & 11.31 & 1.0000 & 0 & 89.8 & 1.0000 & 2 & 88.9 \\
uncorr-s-w\_06 & 750 & 13.5 & c & - & 60.27 & 1.0000 & 0 & 46.3 & 1.0000 & 1 & 45.1 \\
uncorr-s-w\_10 & 750 & 14.4 & u & - & 6.88 & 1.0000 & 0 & 17.2 & 1.0000 & 0 & 15.9 \\
b-s-corr\_01 & 750 & 0.0 & c & - & 10.46 & 1.0000 & 0 & 95.0 & 1.0000 & 2 & 94.0 \\
b-s-corr\_06 & 750 & 0.0 & c & - & 62.42 & 1.0000 & 0 & 56.1 & 1.0000 & 1 & 54.9 \\
b-s-corr\_10 & 750 & 0.0 & c & - & 84.45 & 1.0000 & 0 & 26.2 & 1.0000 & 0 & 24.9 \\
\hline 
\hline
\multicolumn{12}{c}{instance family \texttt{eil101}} \\
\hline
uncorr\_01 & 100 & 49.0 & c & 9 & 0.00 & 1.0000 & 0 & 61.3 & 1.0000 & 1 & 60.2 \\
uncorr\_06 & 100 & 16.0 & c & 714 & 0.00 & 0.9999 & 0 & 40.1 & 1.0000 & 2 & 38.8 \\
uncorr\_10 & 100 & 57.0 & u & 21 & 0.00 & 1.0000 & 0 & 10.2 & 1.0000 & 0 & 8.5 \\
uncorr-s-w\_01 & 100 & 25.0 & c & 3 & 0.00 & 1.0000 & 0 & 91.2 & 1.0000 & 1 & 90.3 \\
uncorr-s-w\_06 & 100 & 17.0 & c & 446 & 0.00 & 1.0000 & 0 & 42.3 & 1.0000 & 1 & 41.0 \\
uncorr-s-w\_10 & 100 & 15.0 & u & 68 & 0.00 & 1.0000 & 0 & 17.4 & 1.0000 & 0 & 16.0 \\
b-s-corr\_01 & 100 & 0.0 & c & 532 & 0.00 & 1.0000 & 0 & 95.4 & 1.0000 & 4 & 94.4 \\
b-s-corr\_06 & 100 & 0.0 & c & - & 44.03 & 1.0000 & 0 & 56.8 & 1.0000 & 2 & 55.7 \\
b-s-corr\_10 & 100 & 0.0 & c & - & 28.96 & 0.9999 & 0 & 28.5 & 1.0000 & 1 & 27.2 \\
uncorr\_01 & 500 & 38.8 & c & - & 13.92 & 1.0000 & 0 & 66.6 & 1.0000 & 3 & 65.5 \\
uncorr\_06 & 500 & 14.4 & c & - & 20.49 & 1.0000 & 0 & 39.6 & 1.0000 & 1 & 38.4 \\
uncorr\_10 & 500 & 51.4 & u & - & 1.94 & 1.0000 & 0 & 11.5 & 1.0000 & 0 & 9.8 \\
uncorr-s-w\_01 & 500 & 20.4 & c & - & 7.00 & 1.0000 & 1 & 89.3 & 1.0000 & 14 & 88.3 \\
uncorr-s-w\_06 & 500 & 14.2 & c & - & 40.92 & 1.0000 & 0 & 45.3 & 1.0000 & 1 & 44.1 \\
uncorr-s-w\_10 & 500 & 16.4 & u & - & 7.20 & 1.0000 & 0 & 16.4 & 1.0000 & 0 & 15.1 \\
b-s-corr\_01 & 500 & 0.0 & c & - & 13.73 & 1.0000 & 1 & 94.4 & 1.0000 & 3 & 93.5 \\
b-s-corr\_06 & 500 & 0.0 & c & - & 68.68 & 1.0000 & 0 & 55.3 & 1.0000 & 2 & 54.1 \\
b-s-corr\_10 & 500 & 0.0 & c & - & 77.57 & 1.0000 & 0 & 26.3 & 1.0000 & 0 & 25.1 \\
uncorr\_01 & 1000 & 37.0 & c & - & 26.74 & 0.9999 & 0 & 67.2 & 1.0000 & 3 & 66.1 \\
uncorr\_06 & 1000 & 15.1 & c & - & 30.91 & 1.0000 & 0 & 39.5 & 1.0000 & 1 & 38.3 \\
uncorr\_10 & 1000 & 50.4 & u & - & 4.69 & 1.0000 & 0 & 11.8 & 1.0000 & 0 & 10.1 \\
uncorr-s-w\_01 & 1000 & 19.7 & c & - & 10.46 & 0.9999 & 248 & 89.3 & 1.0000 & 6144 & 88.3 \\
uncorr-s-w\_06 & 1000 & 13.7 & c & - & 57.02 & 1.0000 & 0 & 45.6 & 1.0000 & 1 & 44.4 \\
uncorr-s-w\_10 & 1000 & 15.9 & u & - & 13.54 & 1.0000 & 0 & 16.7 & 1.0000 & 0 & 15.3 \\
b-s-corr\_01 & 1000 & 0.0 & c & - & 14.41 & 1.0000 & 1 & 93.9 & 1.0000 & 7 & 93.0 \\
b-s-corr\_06 & 1000 & 0.0 & c & - & 80.39 & 1.0000 & 0 & 55.8 & 1.0000 & 2 & 54.6 \\
b-s-corr\_10 & 1000 & 0.0 & c & - & 97.54 & 1.0000 & 0 & 27.1 & 1.0000 & 1 & 25.8 \\
\hline
\end{tabular}
}
\end{table}

We now investigate the effectiveness of proposed approaches by experimental studies. On the one hand, we evaluate our MIP models {\ENKP} and {\ANKP} in terms of solution quality and running time. On the other hand, we assess the advantage of the pre-processing scheme in terms of quantity of discarded items and auxiliary decision variables. The program code is implemented in JAVA using the \textsc{Cplex} 12.6 library with default settings. The experiments have been carried out on a computational cluster with 128 Gb RAM and 2.8 GHz 48-cores AMD Opteron processor. 

The test instances are adopted from the benchmark set $B$ of \cite{Polyakovskiy14}.
This benchmark set is constructed on TSP instances from TSPLIB (see \cite{Reinelt91}). In addition, it contains for each city but the first one a set of items. We use the set of items available at each city and obtain the route from the corresponding TSP instance by running the Chained Lin-Kernighan heuristic (see \cite{chainedLK03applegate}).
Given the permutation $\pi = (\pi_1, \pi_2, \ldots, \pi_n)$ of cities computed by the Chained Lin-Kernighan heuristic, where $\pi_1$ is free of items, we use $N=(\pi_2, \pi_3, \ldots, \pi_n, \pi_1)$ as the route for our problem. We consider the uncorrelated, uncorrelated with similar weights, and bounded strongly correlated types of items' generation, and set $\upsilon_{min}$ and $\upsilon_{max}$ to 0.1 and 1 as proposed for $B$.

The results of our experiments are shown in Tables \ref{tab:res1} and \ref{tab:res2}. First, we investigate three families of small size instances based on the TSP problems \texttt{eil51}, \texttt{eil76}, and \texttt{eil101} with $51$, $76$ and $101$ cities, respectively. Note that all instances of a family have the same route $N$. We considered instances with $1$, $5$, and $10$ items per city. The postfixes $1$, $6$ and $10$ in the instances' names indicate the capacity $W$.
Column $2$ specifies the total number of items $m$. A ratio $\alpha=100\left(m-m'\right)/m$ in Column $3$ denotes a percentage of items discarded in pre-processing step, where $m'$ is the number of items left after pre-processing. Column $4$ identifies by \textit{``u''} whether {\NKPc} has been reduced to {\NKPu} by pre-processing. Columns $5$ and $6$ report a computational time in seconds and a relative gap reached in percents for {\ENKP}. The time limit of 1 day has been given to {\ENKP}. Thus, Column $5$ either contains a required time or ``-'' if the time limit is reached. Results for {\ANKP} with $\tau=100$ are demonstrated in Columns $7$ and $8$, while the case of $\tau=1000$ is shown in Columns $10$ and $11$. Columns $7$ and $10$ report $\rho^\tau$ as a ratio between the best lower bounds obtained by {\ANKP} and {\ENKP}. Within the experiments, {\ANKP} with $\tau=100$ produces an initial solution for {\ENKP}. Columns $8$ and $11$ contain running times of {\ANKP}. The time limit of 2 hours has been given to {\ANKP}. Finally, columns $9$ and $12$ show a rate $\beta^\tau$ which is a percentage of auxiliary decision variables $y_{ib}$ for $i=1,\ldots,n$ and $b\in {B_i}$ used in practice by {\ANKP}. At most $\tau n$ variables is required by {\ANKP}. Thus, $\beta$ is computed as $\beta=100\left(\sum_{i=1}^n \left|B_i\right|\right)/\left(\tau n\right)$.

The results show that only the instances of small size are solved by {\ENKP} to optimality within the given time limit. At the same time, the unconstrained cases of the problem turn out to be easier to handle. They either are solved to optimality or have a low relative gap comparing to the constrained cases, even when latter have less number of items $m$. Generally, the instances with large $W$ are liable to reduction.  Because $W$ is large, they have more chances to loose enough items so that the total weight of rest items becomes less or equal to $W$. However, the pre-processing scheme does not work for bounded strongly correlated type of the instances. No instance of this type is reduced to {\NKPu}. Moreover, the results show that this type is presumably harder to solve comparing to others as expected in \cite{Polyakovskiy14}. In fact, the relative gap is significantly larger concerning this type. 

{\ANKP} is particularly fast and its model is solved to optimality for all the small size instances within the given time limit. The ratio $\rho^\tau$ is very close to $1$ which leads to two observations.  Firstly, {\ANKP} obtains approximately the same result as the optimal solution of {\ENKP} has but in a shorter time. Secondly, {\ENKP} cannot find much better solutions even within large given time. Therefore, we can conclude that {\ANKP} gives an advanced trade-off in terms of computational time and solution's quality comparing to {\ENKP}. It looks very swift even with instances of hard bounded strongly correlated type. Moreover, {\ANKP} produces very good approximation even for reasonably small $\tau=100$. Only one instance of the whole test suite causes a considerable difficulty for {\ANKP} in terms of a running time. The rate $\beta^\tau$ demonstrates that in practice {\ANKP} uses a very reduced set of auxiliary decision variables. The medians over all entries of $\beta^{100}$ and $\beta^{1000}$ are 45.3 and 44.1, respectively. In general, $\beta^\tau$ is significantly small when $W$ is large, since latter results in a slower growth of diapason $\left[\upsilon_i^{\min},\upsilon_i^{\max}\right]$ in {\ANKP}, for $i=1,\ldots,n$. In other words, the instances with large $W$ require less number of auxiliary decision variables comparing to the instances where $W$ is smaller.

The goal of our second experiment is to understand how fast {\ANKP} handles instances of larger size. We use the same settings as for the first experiment, but now give {\ANKP} the time limit of 6 hours and set $\tau=100$. We investigate two families of largest size instances of $B$ of \cite{Polyakovskiy14}, namely those based on the TSP problems \texttt{pla33810} and \texttt{pla85900} with $33810$ and $85900$ cities, respectively. Table \ref{tab:res2} reports the results. {\ANKP} needs less than $\sim 40$ minutes to solve any instance of family \texttt{pla33810}. Almost all instances of family \texttt{pla85900} can be solved within 2 hours; it takes no longer than $\sim 5.5$ hours for any of them. Therefore, {\ANKP} proves its ability to master large problems in a reasonable time.

\begin{table}[!htbp]
\centering
\caption{Results of Computational Experiments on Large Size Instances}
\label{tab:res2}
{\tiny
\begin{tabular}{l@{\hskip 0.5in}l}
\begin{tabular}{r||r|r|c||r|r}
\hline
instance & $m$ & $\alpha$ & {\textit{ver}} & $t^{100}$ & $\beta^{100}$\\
\hline
\hline
\multicolumn{6}{c}{instance family \texttt{pla33810}} \\
\hline
uncorr\_01 & 33809 & 29.0 & c & 522 & 77.7 \\
uncorr\_06 & 33809 & 12.8 & c & 337 & 41.9 \\
uncorr\_10 & 33809 & 35.9 & u & 32 & 14.2 \\
uncorr-s-w\_01 & 33809 & 19.3 & c & 425 & 88.5 \\
uncorr-s-w\_06 & 33809 & 11.2 & c & 634 & 46.8 \\
uncorr-s-w\_10 & 33809 & 8.7 & c & 33 & 17.3 \\
b-s-corr\_01 & 33809 & 0.0 & c & 419 & 92.6 \\
b-s-corr\_06 & 33809 & 0.0 & c & 582 & 55.3 \\
b-s-corr\_10 & 33809 & 0.0 & c & 696 & 25.6 \\
uncorr\_01 & 169045 & 30.6 & c & 601 & 75.6 \\
uncorr\_06 & 169045 & 12.8 & c & 1276 & 41.8 \\
uncorr\_10 & 169045 & 35.8 & u & 72 & 13.9 \\
uncorr-s-w\_01 & 169045 & 15.2 & c & 389 & 89.5 \\
uncorr-s-w\_06 & 169045 & 11.7 & c & 600 & 46.3 \\
uncorr-s-w\_10 & 169045 & 9.0 & c & 774 & 17.1 \\
b-s-corr\_01 & 169045 & 0.0 & c & 1526 & 92.7 \\
b-s-corr\_06 & 169045 & 0.0 & c & 433 & 55.4 \\
b-s-corr\_10 & 169045 & 0.0 & c & 830 & 25.4 \\
uncorr\_01 & 338090 & 31.6 & c & 2079 & 74.5 \\
uncorr\_06 & 338090 & 12.8 & c & 1272 & 41.7 \\
uncorr\_10 & 338090 & 35.9 & u & 1264 & 13.8 \\
uncorr-s-w\_01 & 338090 & 15.2 & c & 1266 & 89.6 \\
uncorr-s-w\_06 & 338090 & 11.9 & c & 1225 & 46.2 \\
uncorr-s-w\_10 & 338090 & 9.0 & c & 2509 & 17.1 \\
b-s-corr\_01 & 338090 & 0.0 & c & 851 & 92.6 \\
b-s-corr\_06 & 338090 & 0.0 & c & 971 & 55.4 \\
b-s-corr\_10 & 338090 & 0.0 & c & 1300 & 25.4 \\
\hline
\end{tabular}
&
\begin{tabular}{r||r|r|c||r|r}
\hline
instance & $m$ & $\alpha$ & {\textit{ver}} & $t^{100}$ & $\beta^{100}$\\
\hline 
\hline
\multicolumn{6}{c}{instance family \texttt{pla85900}} \\
\hline
uncorr\_01 & 85899 & 32.4 & c & 2582 & 72.8 \\
uncorr\_06 & 85899 & 13.5 & c & 3888 & 40.6 \\
uncorr\_10 & 85899 & 40.8 & u & 140 & 12.9 \\
uncorr-s-w\_01 & 85899 & 16.4 & c & 1707 & 89.2 \\
uncorr-s-w\_06 & 85899 & 12.3 & c & 2053 & 45.7 \\
uncorr-s-w\_10 & 85899 & 13.6 & u & 152 & 16.3 \\
b-s-corr\_01 & 85899 & 0.0 & c & 4021 & 92.6 \\
b-s-corr\_06 & 85899 & 0.0 & c & 1619 & 55.4 \\
b-s-corr\_10 & 85899 & 0.0 & c & 3550 & 25.4 \\
uncorr\_01 & 429495 & 32.5 & c & 3506 & 72.7 \\
uncorr\_06 & 429495 & 13.6 & c & 6416 & 40.5 \\
uncorr\_10 & 429495 & 40.4 & u & 538 & 13.0 \\
uncorr-s-w\_01 & 429495 & 16.3 & c & 2470 & 89.2 \\
uncorr-s-w\_06 & 429495 & 12.8 & c & 7918 & 46.3 \\
uncorr-s-w\_10 & 429495 & 13.2 & u & 585 & 16.5 \\
b-s-corr\_01 & 429495 & 0.0 & c & 3492 & 92.6 \\
b-s-corr\_06 & 429495 & 0.0 & c & 5835 & 55.2 \\
b-s-corr\_10 & 429495 & 0.0 & c & 6834 & 25.4 \\
uncorr\_01 & 858990 & 33.2 & c & 7213 & 71.6 \\
uncorr\_06 & 858990 & 13.6 & c & 5752 & 40.4 \\
uncorr\_10 & 858990 & 40.6 & u & 1895 & 13.1 \\
uncorr-s-w\_01 & 858990 & 16.4 & c & 5036 & 89.2 \\
uncorr-s-w\_06 & 858990 & 12.7 & c & 11793 & 46.3 \\
uncorr-s-w\_10 & 858990 & 13.2 & u & 15593 & 17.4 \\
b-s-corr\_01 & 858990 & 0.0 & c & 6066 & 92.6 \\
b-s-corr\_06 & 858990 & 0.0 & c & 14733 & 56.2 \\
b-s-corr\_07 & 858990 & 0.0 & c & 19346 & 26.4 \\
\hline
\end{tabular}
\end{tabular}
}
\end{table}

\section{Conclusion}

We have introduced a new non-linear knapsack problem where items during a travel along a fixed route have to be selected. We have shown that both the constrained and unconstrained version of the problem are $\mathcal{NP}$-hard. Our proposed pre-processing scheme can significantly decrease the size of instances making them easier for computation. The experimental results show that small sized instances can be solved to optimality in a reasonable time by the proposed exact approach. Larger instances can be efficiently handled by our approximate approach producing near-optimal solutions. 

As a future work, this problem has several natural generalizations. First, it makes sense to consider the case where the sequence of cities may be changed. This variant asks for the mutual solution of the traveling salesman and knapsack problems. Another interesting situation takes place when cities may be skipped because are of no worth, for example any item stored there imposes low or negative profit. Finally, the possibility to pickup and delivery the items is for certain one another challenging problem.

\section*{Acknowledgments}
This research was supported under the ARC Discovery Project DP130104395.

\bibliographystyle{splncs03}
\bibliography{references}{}

\end{document}